\newtheorem{remark}{Remark}
\newtheorem{pro}{Proposition}
\newtheorem{thm}{Theorem}
\newtheorem{assumption}{Assumption}
\newtheorem{definition}{Definition}
\newtheorem{example}{Example}
\newcommand{\com}[1]{{\color{red}{Comment: #1}}}
\newcommand{\com}[1]{}
\title{\LARGE \bf
Incentive Design in a Distributed Problem with Strategic Agents}
\author{Donya Ghavidel, Pratyush Chakraborty, Enrique Baeyens, Vijay Gupta, and Pramod P. Khargonekar
\thanks{Donya Ghavidel and Vijay Gupta are with the Department of Electrical Engineering,
        University of Notre Dam, IN, USA
        {\tt\footnotesize dghavide@nd.edu , vgupta2@nd.edu}}
\thanks{Pratyush Chakraborty is with the Department of Mechanical Engineering,
        University of California, Berkeley, CA, USA
        {\tt\footnotesize pchakraborty@berkeley.edu}}%
\thanks{Enrique Baeyens is with Instituto de las Tecnolog\'{\i}as Avanzadas de la Producci\'on,
Universidad de Valladolid, Valladolid, Spain
        {\tt\footnotesize enrbae@eis.uva.es}}%
\thanks{Pramod P. Khargonekar is with the Department of Electrical Engineering and Computer Science,
        University of California, Irvine, CA, USA
        {\tt\footnotesize pramod.khargonekar@uci.edu}}%
}
\begin{document}
\maketitle
\thispagestyle{empty}
\pagestyle{empty}

\begin{abstract}
 In this paper, we consider a general distributed system with multiple agents who select and then implement actions in the system. The system has an operator with a centralized objective. The agents, on the other hand, are self-interested and strategic in the sense that each agent optimizes its own individual objective. 
 The operator aims to mitigate this misalignment by designing an incentive scheme for the agents. The problem is difficult due to the cost functions of the agents being coupled, the objective of the operator not being social welfare, and the operator having no direct control over actions being implemented by the agents. This problem has been studied in many fields, particularly in mechanism design and cost allocation. However, mechanism design typically assumes that the objective of the operator is social welfare and the actions being implemented by the operator. On the other hand, cost allocation classically assumes that agents do not anticipate the effect of their actions on the incentive that they obtain. We remove these assumptions and present an incentive rule for this setup by bridging the gap between mechanism design and classical cost allocation. We analyze whether the proposed design satisfies various desirable properties such as social optimality, budget balance, participation constraint, and so on. We also analyze which of these properties can be  satisfied if the assumptions on cost functions of the agents being private and the agents being anticipatory are relaxed.
 \end{abstract}

\section{Introduction}

Using game theoretic tools to design distributed controllers in multi-agent systems where each agent has an individual utility function is a well-established research field~\cite{marden2012,atzeni2}. Many impressive results are available for questions such as analyzing the behavior of agents, for instance in a Nash equilibrium sense \cite{bacsar1998}, using learning algorithms to identify these strategies when utility functions for the other agents may not be known \cite{konstant}, and assigning utility functions to cooperative agents so that a desired behavior emerges in an optimization problem \cite{waslander,ansaripour2013robust,arslan2006}. Applying results from game theory to control raises several new challenges including presence of dynamics, constraints, and utility functions that are non-standard in game theory. 

In this paper, we are interested in a set-up where multiple agents that have their own utility functions act on a single system. Further, there is a central operator that has its own utility function and can incentivize (or tax) the agents to behave in a manner it desires~\cite{butkovskiui1969, ansaripour2018multi}. Examples of problems in this framework may include design of resource allocation algorithms in shared infrastructure systems, for instance, routing and toll design in transportation networks~\cite{vetta}, transmission network design and operation in power grids \cite{ruiz2007}, and so on. The central operator does not have knowledge of the utility functions of the agents. Since the agents are strategic, they may misreport these functions upon being asked. In such a setup, questions about incentive design for various aims such as optimizing the system efficiency, fairness, budget balance, and so on naturally arise. 

The fields that are primarily relevant to this question are mechanism design~\cite{mas1995} and cost allocation~\cite{young1985cost}. Mechanism design considers the question of incentive (or tax) design when utility functions of the agents are unknown to the central operator; the classical application is in auction theory \cite{krishna}. Three characteristics complicate the application of classical mechanism design to the distributed control setup considered in this paper:
\begin{enumerate} 
\item\label{item:first} Agents have interdependent utility functions and thus the utility function of each agent (and thus its desired strategy) is a function of the strategy of the other agents.
\item\label{item:second} The utility function of the operator may not be the sum of the utility functions of the agents, which is usually termed social welfare.
\item\label{item:third} The actions are implemented by the agents and the operator does not have direct control over them.
\end{enumerate}
There have been some recent works that have considered some of these issues.  In \cite{tanaka}, the authors present a mechanism for a group of self-interested power generators to implement socially optimal model predictive control algorithms for load frequency control (LFC). This work presents  a mechanism to mitigate the information asymmetry between the agents and the operator with the  application of MPC-based LFC, but ignores issues~(\ref{item:second}) and~(\ref{item:third}) mentioned above. A mechanism to execute a distributed algorithm  by strategic agents, who have individual utility functions and the operator does not have knowledge of the utility functions, is designed in \cite{tanaka2}. However, this work  ignores issues~(\ref{item:second}) and~(\ref{item:third}). A mechanism to achieve social welfare maximization and budget balance by agents in a LQG problem, formulated for power networks, is presented in~\cite{okajima}; however, once again, these works ignore issues~(\ref{item:first}) and~(\ref{item:second}). Finally, we would like to mention the work~\cite{tanakadynamic} that develops a mechanism to incentivize agents to implement decisions deemed optimal by the central operator. However, that work assumes the utility functions of the agents to be known by the central operator.

Cost allocation classically considers the problem of allocating to various agents the loss of the utility function of the central operator due to the self-interested behavior of the agents. Many allocation rules, such as the proportional rule, the Shapley value, and the nucleolus, that each display different sets of properties have been proposed in the literature. Works that develop these roles for control applications include, e.g.,~\cite{Bitar, chakraborty2016 }. In particular,\cite{ Bitar} proposes a cost allocation proportional to the marginal contributions of the agents while satisfying some fairness properties. In~\cite{chakraborty2, chakraborty2016} the authors use cost allocation based on standalone cost principle and cost causation principle respectively. A different, but related, line of work is that in~\cite{lian,an20} that develops a distributed algorithm based on the Nash Bargaining Solution to fairly allocate the costs of the communication infrastructure in  \cite{lian} and power management in  \cite{an20} among the agents. However, to the best of our knowledge, the literature on cost allocation assumes the agents to be not cost anticipatory, in the sense that they do not optimize the decision that they take with respect to both their a priori utility function and the cost that will be allocated to them.


Thus, there is a gap in the literature. While mechanism design considers strategic agents with private utility functions that are unknown to the central operator, existing work does not consider issues such as the utility functions of the agents being coupled, the operator being interested in a specific objective that is not the social welfare, and actions being implemented by the agents and not the operator all together. On the other hand, while cost allocation can consider coupled utility functions and actions being implemented by the agents, it does not consider agents that anticipate the cost allocation rule and an operator that has a specific utility function. The main contribution of this paper is to fill this gap in the context of a distributed multi-agent control problem.

Specifically, we consider a multi-agent problem in which the central operator and the agents have private individual utility functions. The agents are selfish in that they seek to optimize their own utility, strategic in that they can misreport any information, and anticipatory in the sense that they consider the impact of any incentive design from the operator. All the issues~(\ref{item:first})--(\ref{item:third}) are present. We propose an incentive design to align the actions of the individual agents with those desired by the central operator. We compare this design (that assumes utility functions to be private information) with an mechanism design (that assumes knowledge of the utility functions of the agents) inspired by traditional mechanism design and a cost allocation for the case when agents are not anticipatory, with respect to various properties such as budget balance, participation constraint and so on.

The rest of the paper is organized as follows. In Section \ref{PF},
the problem statement and definition of desired properties are presented. In Section \ref{MR}, we propose
an incentive mechanism satisfying the properties, compare it with the classic cost allocation rule, and interpret it as a mechanism design. In
Section \ref{conclusion}, we conclude the paper and present some avenues
for future work.

\section{ Problem Formulation}
\label{PF}
We begin with the system model and next define the desired properties.
\subsection{System Model}
We consider a general distributed system with multiple agents who decide their actions according to individual utility functions and a central operator that also has an associated objective function. Let $\mathcal N=\{1,\cdots, N\}$ denote the set of the $N$ agents present. Let $u_{i}$ be the action decided by the $i$-th agent. Denote the set of actions of all the agents by $U\triangleq\{u_1,\cdots, u_N\}$ and that of actions of all the agents except the $i$-th agent by $u_{-i}\triangleq\{u_1,\cdots,u_{i-1},u_{i+1},\cdots, u_N\}$. In keeping with the interpretation as a control problem, the utility for each agent and the central operator is provided through a cost function that needs to be minimized. Specifically, denote by $C_{i}(U)$ the cost function of the $i$-th agent and by $J(U)$ that of the central operator. Notice that all the cost functions depend on the actions of every agent; to emphasize this, we sometimes use the alternative notation $C_{i}(u_{i},u_{-i})$. Denote the control actions desired by the central operator from the $i$-th agent by $u_{i}^{\star}.$ In other words,
\begin{equation}
U^\star=\arg\min J(U),
\label{eq1}
\end{equation}
where $U^{\star}\triangleq\{u_1^{\star},\cdots, u_N^{\star}\}.$
 
Since the cost function of each agent depends on the actions of all the agents, there is a non-cooperative game among them. The best response strategy of the $i$-th agent can be defined as 
\begin{equation}
 \bar{u}_i (u_{-i})=\arg \min C_i(U).
 \label{eq2}
\end{equation}
Further, a Nash Equilibrium (NE) of the game can be defined as the set $\bar{U}\triangleq\{\bar{u_1},\cdots, \bar{u}_{N}\}$ such that the following inequalities hold:
\begin{equation}
C_i(\bar{u}_i, \bar{u}_{-i})\leq C_i(u_i, \bar{u}_{-i})\quad \text{for all} \quad i.
\label{eq3}
\end{equation}
We make the following assumption:
\begin{assumption}
At least one pure strategy Nash equilibrium as defined by the set $\bar{U}$ exists.
\end{assumption}
\begin{remark}
Note that the Nash equilibrium may not be unique. 
\end{remark}

Since the set of actions desired by the central operator $U^{\star}$ may not be a Nash equilibrium, the agents will not, in general, choose actions that are desired by the central operator. To incentivize the agents to choose actions that are desired by himself, the operator assigns an incentive calculated according to the function $t_i(U)$ to the $i$-th agent.
Considering the cost minimization set-up, this incentive can correspond to a reward (if negative) or a tax (if positive) on the $i$-th agent. If the coupled decision problem outlined above corresponds to a resource allocation problem in a shared infrastructure system with the central operator interested in minimizing the cost to society, the incentive can be interpreted as a fee that each agent pays to use the system with the magnitude being different for each agent according to how much inefficiency they cause with respect to the social cost.

\begin{remark}
We allow an agent to withdraw from the incentive scheme if it so desires. In other words, the participation of the agents is voluntary and the so-called outside option of the agent $i$ is simply to opt out unilaterally out of the incentive design.
\end{remark}
With a given incentive design, the cost functions of the agents and the central operator change. Denote by $\mathcal{O}$ the set of agents which opt out. Then, the cost function of the central operator is given by $J(U)-\sum_{i\in\mathcal{N}\setminus\mathcal{O}}t_{i}(U),$ that of any agent $i\in\mathcal{O}$ by $C_{i}(U),$ and that of any agent $i\in\mathcal{N}\setminus\mathcal{O}$ by $C_{i}(U)+t_{i}(U).$ We also introduce some notation to distinguish between the Nash equilibria that occur when agent $i$ participates versus when it opts out. Denote the Nash equilibrium strategies of the agents when agent $i$ participates by ${U}'$ or $({u}'_i, {u}'_{-i}),$ and for the case when agent $i$ unilaterally opts out of the mechanism by  $(\tilde{u}_i, \hat{u}_{-i})$. The notation for the various cases is summarized in Table \ref{table2}. 
We make the following assumption.

\begin{table}
\vspace{3mm}
\caption{Notation summary in various scenarios}
    \begin{tabular}{ p{3cm} | p{1cm}| p{3.1cm}}
    \hline
    \textbf{Scenarios} & NE & cost  of the operator \\ \hline
    \vspace{0.1mm} With no incentive design & \vspace{0.1mm}$(\bar{u}_i, \bar{u}_{-i})$ &  \vspace{0.1mm}$J(\bar{U})$  \\ \hline
    With incentive design when agent $i$ participates &$({u}'_i, {u}'_{-i})$ & $J(U')-\sum_{j\in\mathcal{N}\setminus\mathcal{O}}t_{j}(U')$  \\ \hline
    With incentive design when agent $i$ opts out & $(\tilde{u}_i, \hat{u}_{-i})$ & $J(\tilde{u}_i, \hat{u}_{-i})-\sum_{j\in\mathcal{N}\setminus\mathcal{O}}t_{j}(\tilde{u}_i, \hat{u}_{-i})$\\
    \hline
    \end{tabular}
    \label{table2}
    \vspace{-6mm}
\end{table}

\begin{assumption}
We assume that at least one pure strategy Nash equilibrium of the form $({u}'_i, {u}'_{-i})$  and $(\tilde{u}_i, \hat{u}_{-i})$ exist for the case when any agent $i$ that decides to opt out.
\end{assumption}
The problem considered in the paper is to analyze incentive scheme that satisfies certain desirable properties with or without access to the cost functions of the agents. Next, we define the set of desirable properties.

\subsection{Properties}
\label{sec:prop}
We are interested in incentive schemes that satisfy certain properties. To define the properties, it will be useful to define two costs generated by strategic behavior of the agents. 
\begin{definition}[Excess cost]
For a given set $U_r\triangleq\{{u_{r_1}},\cdots,{u}_{r_N}\}$ of actions realized by the agents, the excess cost imposed on the system operator is defined by
\[
\Theta=J( U_r)-J(U^\star).
\]
\end{definition}
Note that depending on the agents participation, $U_r$ can be equal to $U'$, $\bar{U}$ and $(\tilde{u}_i, \hat{u}_{-i})$.

 \begin{definition}[Marginal cost]
For an action $u_{r_{i}}$ by the $i$-th agent, consider the set of actions $U^{i}_m\triangleq\{u^\star_1, \cdots, u^\star_{i-1}, u_{r_i}, u^\star_{i+1}, \cdots, u^\star_{N}\},$ where  $u_j^\star$ denotes the action of the $j$-th agent that is optimal according to the cost function $J$ of the operator. Then, the  marginal cost $\theta_{i}$ is defined as  $\theta_{i}=J(U^{i}_m)-J(U^\star)$.

\end{definition}
Note that both $\theta_{i}\geq 0$ and $\Theta\geq 0.$ Further, the excess cost is generated because of the marginal costs for all the agents.

We now list the properties we are interested in~\cite{young1985cost,mas1995}.
\begin{enumerate}
\item \emph{Social optimality:} An incentive scheme is said to be socially optimal if the actions taken by the agents at Nash equilibrium with the incentive scheme in place are the same as the actions that optimize the cost function of the operator, i.e., if  $U'=U^\star$.
\item \emph{Budget balance:}
An incentive scheme is said to be  (resp. weak) budget balanced, if at the resulting Nash equilbrium, the total incentive $t_i$ is equal to (resp. not less than) the excess cost. Thus,  an incentive scheme is  budget balanced if $\sum_{i=1}^{N}t_i = \Theta$
and weak budget balanced if $
\sum_{i=1}^{N}t_i > \Theta.$

\item \emph{Participation constraint}: Participation constraint implies that the agents benefit from participation in the incentive scheme in the sense that for every agent $i$, the cost when it opts out of the incentive design is larger than its cost function when it participates in the incentive design. Thus, an incentive scheme satisfies participation constraint for agent $i$ if 
$C_i(\tilde{u}_i, \hat{u}_{-i})\geq C_i(U')+t_i.$


\item \emph{Equity:} An incentive design is said to satisfy equity if for any two agents $i$ and $j$ that have the same marginal costs, the incentive is equal as well. Thus, an incentive design satisfies equity if $\forall i,j$ such that $\theta_i = \theta_j$, it holds that $t_i = t_j$.

\item \emph{Monotonicity:} An incentive design is said to be monotonic if every agent $i$ that has a higher marginal cost than agent $j$ is assigned a higher incentive (or tax) than agent $j$. Thus, an incentive design is monotonic if for any $i,j$ such that $\theta_i \geq \theta_j$, it holds that $t_i \geq t_j$.
\end{enumerate}


\vspace{-1.5mm}

\subsection{Problem Considered}
Given the above setup, we are interested in designing the incentive scheme that satisfy the properties mentioned in Section~\ref{sec:prop}. 
In some sense, such an incentive design presents a bridge between mechanism design and cost allocation. While similar to mechanism design we consider anticipatory strategic agents, we assume that their cost functions are coupled and unknown to the central operator who is interested in a specific cost function that is not the social welfare, and actions being implemented by the agents and not the operator. On the other hand, while similar to cost allocation we consider coupled cost functions and actions being implemented by the agents, we assume that the agents anticipate the cost allocation rule. In Section~\ref{CA}, we consider anticipatory agents as well as  non-anticipatory agents  and study the properties achieved by the scheme in this scenario that is inspired by classic cost allocation. Similarly, in Section~\ref{PD}, we relax the assumption of cost functions being unknown to the central operator and study the properties that can be achieved by a mechanism inspired by the classic Vickrey-Clarke-Groves (VCG) mechanism ~\cite[Chapter~23]{mas1995}. Finally, we  compare  the proposed  incentive structures in terms of properties that can be achieved by each.
\section{Main Results}
\label{MR}
We begin by presenting the incentive design for the problem considered above. We will then study the incentive with assumptions similar to classic mechanism design and cost allocation.

\subsection{Proposed Incentive Design}
\label{CAW}
We consider a set-up in which the operator does not access  the knowledge of the cost function of each agent  $C_i$; rather, the knowledge of the resulting actions of each agent suffices.  

We would like to note that the agents are anticipatory. Thus, they know about the incentive $t_i$ and optimize their decisions after including  $t_i$ into their cost function when selecting their decisions. That the incentive design problem is not trivial as can be seen from the following, perhaps counter intuitive, result. One may imagine that since one source of the complexity in the problem is the coupling of cost functions of the various agents, removing that coupling may imply that one or more properties are always satisfied. However, the following result  shows that if the cost functions are decoupled so that $C_i=C_i(u_i)$ for all $i$,  the  participation constraint and budget balance can not hold simultaneously.
\begin{pro}
\label{pro6}
For $C_i=C_i(u_i)$ for all $i$, if we design an incentive such that the agents rationally participate, the (weak) budget balance constraint can never be satisfied.
\end{pro}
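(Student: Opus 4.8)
The plan is to exploit the fact that with decoupled cost functions an agent that opts out can always recover the global minimum of its own cost, which tightly constrains the sign of any incentive compatible with voluntary participation. First I would pin down the opt-out outcome: when agent $i$ unilaterally leaves the scheme, its cost reverts to $C_i(U)=C_i(u_i)$ with no incentive term, and since this depends only on $u_i$, agent $i$ simply plays its standalone minimizer. Hence $\tilde{u}_i=\arg\min_{u_i} C_i(u_i)=\bar{u}_i$ and the realized opt-out cost is the global minimum $C_i(\bar{u}_i)$; in particular the actions $\hat{u}_{-i}$ of the remaining agents are irrelevant to agent $i$'s cost.

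Next I would convert the participation constraint into a sign condition on $t_i$. When agent $i$ participates, its equilibrium cost is $C_i(u'_i)+t_i(U')$, so the participation constraint $C_i(\tilde{u}_i,\hat{u}_{-i})\geq C_i(U')+t_i$ reduces to $C_i(\bar{u}_i)\geq C_i(u'_i)+t_i$. Because $\bar{u}_i$ globally minimizes $C_i$, we also have $C_i(u'_i)\geq C_i(\bar{u}_i)$. Chaining the two inequalities yields $C_i(\bar{u}_i)\geq C_i(\bar{u}_i)+t_i$, so $t_i\leq 0$ for every participating agent, and therefore $\sum_{i} t_i\leq 0$.

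Finally I would compare against budget balance. By the earlier observation the excess cost satisfies $\Theta\geq 0$, so weak budget balance demands $\sum_i t_i>\Theta\geq 0$, which is outright incompatible with $\sum_i t_i\leq 0$; the weak version can therefore never hold. For exact budget balance $\sum_i t_i=\Theta$, the bounds $\sum_i t_i\leq 0\leq\Theta$ force $\sum_i t_i=\Theta=0$, the degenerate case in which no incentive is applied and the realized actions already coincide with $U^\star$. Once any incentive is genuinely needed ($\Theta>0$), neither budget balance nor weak budget balance can be satisfied.

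I expect the main subtlety to be the careful justification that opting out yields \emph{exactly} the standalone minimizer $\bar{u}_i$, so that the opt-out cost is truly the global minimum of $C_i$ and the chained inequality is tight enough to force $t_i\leq 0$; the remaining work is only keeping the sign conventions straight and explicitly disposing of the degenerate all-zero case.
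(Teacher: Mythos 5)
Your proposal is correct and follows essentially the same route as the paper's proof: the decoupled structure makes the opt-out action the standalone minimizer of $C_i$, so participation forces $t_i\leq 0$ for every agent, while (weak) budget balance forces $\sum_i t_i\geq \Theta\geq 0$, a contradiction. Your explicit treatment of the degenerate case $\Theta=0$ (where exact budget balance holds trivially with all $t_i=0$) is a small but welcome refinement that the paper's own proof glosses over.
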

\begin{proof}
See Appendix in \cite{extend}.
\end{proof}
Note that $C_i=C_i(u_i)$ implies that the decisions made by the agents are independent of each other.
Unlike  decoupled case, for the case of coupled  cost  function it is sometimes possible to come up with an incentive which guarantees participation constraint as well as budget balance properties.

\begin{example}
 Let the cost functions of agent 1 and 2 be
\[
C_1(U)=u^2_1-2u_1u_2, \quad
C_2(U)=u_1u_2-u_2.
\]
With no incentive, the Nash equilibrium solution is given by $\bar{U}=(1, 1)$. The cost function of the central operator on the other hand is \[J(U)=(u_1-\frac{3}{4})^2 + (u_2-2)^2.\] Thus, $U^{\star}=(\frac{3}{4}, 2)$  and  $J(U^{\star})=0$.
The operator gives incentive $t_i$ to agent $i$ such that $t_1(u_1)= u_1^2, \quad t_2(u_2)=-\frac{1}{2}.$
The cost functions of the agents after having incentive are $ C_1(U) + t_1(u_1)$ and  $ C_1(U) + t_2(u_2)$. Therefore, Nash equilibrium if the agents participate in the incentivization  is given by
 $  U'=(1,2) $ and, consequently the value of the cost functions and the incentives are given by 
$C_1({U}')=-3$, $C_2({U}')=0$, $t_1({u}'_1)=1 $ and $t_2({u}'_2)=-\frac{1}{2}$.

On the other hand, if agents wish to opt out unilaterally, their Nash equilibrium solution and the  cost functions are as follows.
$(\tilde{u}_1, \hat{u}_{2})=(1,1)$  is calculated as the NE of $C_1$ and $C_{2}+t_{2}$. $(\hat{u}_{1}, \tilde{u}_2)=(1,2)$ is calculated as the NE of $C_2$ and $C_{1}+t_{1}$.
As a result, $C_1(\tilde{u}_1, \hat{u}_{2})=-1$ and $C_2(\hat{u}_{1}, \tilde{u}_2)=0$.
Next, we check the participation constraint for $i=1,2$ as
$
C_i(\tilde{u}_i, \hat{u}_{-i})\geq C_i(U')+t_i,
$
which holds for both of the agents using the designed incentive.
Since participation constraint is satisfied for both of the agents, they realize $U'=(1,2)$ 
and $J({U}')- J(U^{\star})=\frac{1}{16}$. Since $t_1(U')+t_2(U')=\frac{1}{2}$ in this case then one can  see that weak budget balance holds $J({U}')- J(U^{\star}) < \sum_{i=1}^2 t_i(U').$

Thus, participation constraint and weak budget balance are satisfied. One can check that monotonicity is also satisfied. Note that using this incentive by the operator, the agents shift from $\bar{U}$ to $U'$. If one calculates the cost function of operator in these two cases $J(\bar{U})=1+\frac{1}{16}$ and $J(U')-(t_1+t_2)=\frac{1}{16}-\frac{1}{2}$. Hence, the cost function of the operator decreased by offering the incentive and the incentive is in the benefit of the operator. 
\end{example}

This is to note that social optimality may not always be satisfied by incentive design.
In general without knowledge of cost function of the agents, one can not come up with an incentive design  which \textit{always} (for all cost function) satisfies social optimality and participation constraint.




 \subsection{Interpretation as Cost Allocation}
\label{CA}
The role of any cost allocation is to suitably allocate the excess cost $\Theta$ of the operator that is generated due to the self-interested behavior of the agents to the agents in a manner that satisfies certain properties. We propose the following {\em proportional cost allocation rule}, under which the incentive $t_i$ for agent $i$ by the operator is given by
 \begin{equation}
 \label{cost}
t_i=\frac{\theta_i}{\sum_{j=1}^{N}\theta_j} \Theta.
\end{equation}
We sometimes use the alternative notation $t_i(U_r, U^\star)$.
We now analyze the properties of the cos allocation scheme~(\ref{cost}) when the underlying framework is that of classical cost allocation. Recall that in classical cost allocation, the agents are assumed to be non-anticipatory in that they do not anticipate the effect of the incentive scheme while choosing their actions. An alternate interpretation is that each agent minimizes its own cost function and incurs a cost for that action in an ex-post manner.

The presence of non-anticipatory agents implies that the participation constraint needs to be modified. Recall that the participation constraint  was of the form $C_i(\tilde{u}_i, \hat{u}_{-i})\geq C_i(U')+t_i(U').$ However, when the agents are non-anticipatory, they optimize their own cost functions irrespective of the incentive design and the participation constraint in this case implies
$C_i(\bar{U})\geq C_i(\bar{U})+t_i,$
 which is never satisfied. Thus the central operator needs to have the authority to penalize the agents so that they participate. To proceed in this case, we define weak participation constraint as being satisfied when the agents have to pay $\theta_i$, if they do not participate in the incentive scheme. Thus, the participation constraint in the weak sense is given  as
\[
t_i +C_i(\bar{U})\leq \theta_i
+C_i(\bar{U}),
\]
or as $t_{i}\leq\theta_i.$ 
Non-anticipatory  agents ignore the cost allocation rule while selecting their actions and do not modify their decision due to the incentive design. Thus, the same actions are chosen irrespective of the design of the incentive scheme $t_i$.   Thus, social optimality is not guaranteed unless the objectives of the agents and the operator are aligned due to some special structure on the cost functions $J$ and $C_{i}$. Since the operator does not require  the knowledge of the cost function of each agent  $C_i$ in implementing~(\ref{cost}), the same incentive design can be used if agents are anticipatory. 

\begin{remark}
For the anticipatory agent, the following is a sufficient condition for the proposed cost function to be socially optimal
\begin{equation*}
\label{Social}
\text{for any $U_r$ and $i\in \mathcal N$}, \quad C_i(U_r)+t_i(U_r,U^\star)\geq C_i(U^\star).   
\end{equation*}
\end{remark}
We make the following assumption that would be crucial to satisfying budget balance and participation constraint simultaneously.
\begin{assumption}
\label{asum1}
The excess cost that needs to be allocated is less than the sum of the individual marginal costs that cause it, i.e., $\Theta\leq \sum_{j=1}^{N}\theta_j$.\end{assumption}

\begin{thm}
\label{thm1}
Consider the proportional allocation incentive design in \eqref{cost}. The design:
\begin{enumerate}[(i)]
 \item satisfies participation constraint for the non-anticipatory agents if Assumption \ref{asum1} holds.
 \item always satisfies  budget balance, equity, and monotonicity.
\end {enumerate}
Further, there is no incentive design $t_i$ that satisfies participation constraint and budget balance simultaneously if agents are non-anticipatory and Assumption \ref{asum1} does not hold.
\end{thm}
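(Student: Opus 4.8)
The plan is to treat the three claims separately, since each reduces to elementary algebra once the proportional rule~\eqref{cost} is substituted. Throughout I would assume $\sum_{j=1}^{N}\theta_j>0$ so that~\eqref{cost} is well defined; the degenerate case $\sum_j\theta_j=0$ forces every $\theta_i=0$ and hence $\Theta=0$, in which case $t_i=0$ trivially satisfies all of the listed properties.

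For part~(i), recall that for non-anticipatory agents the relevant weak participation constraint derived just above the theorem is simply $t_i\le\theta_i$ for every $i$. First I would dispose of the case $\theta_i=0$, where~\eqref{cost} gives $t_i=0$ and the inequality holds. For $\theta_i>0$, substituting~\eqref{cost} and dividing through by $\theta_i$ reduces $t_i\le\theta_i$ to $\Theta/\sum_{j}\theta_j\le 1$, which is exactly Assumption~\ref{asum1}. Hence the participation constraint holds for all agents precisely when Assumption~\ref{asum1} is in force.

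For part~(ii), budget balance follows by summing~\eqref{cost}: $\sum_{i}t_i=\bigl(\sum_i\theta_i\big/\sum_j\theta_j\bigr)\Theta=\Theta$, where the normalization is immediate. Equity and monotonicity then follow from the observation that $t_i=\lambda\theta_i$ with a common nonnegative factor $\lambda=\Theta/\sum_j\theta_j\ge 0$ (recall $\Theta\ge 0$): equal marginal costs yield equal incentives, and $\theta_i\ge\theta_j$ yields $t_i=\lambda\theta_i\ge\lambda\theta_j=t_j$ because $\lambda\ge 0$. None of these three arguments uses Assumption~\ref{asum1}, matching the claim that they always hold.

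For the final impossibility statement, I would argue by contradiction for an arbitrary incentive rule $t_i$ (not necessarily~\eqref{cost}). Suppose $t_i$ satisfies both the weak participation constraint $t_i\le\theta_i$ for all $i$ and budget balance $\sum_i t_i=\Theta$. Summing the participation inequalities over $i$ and then invoking budget balance gives $\Theta=\sum_i t_i\le\sum_i\theta_i$, i.e.\ $\Theta\le\sum_j\theta_j$, which is precisely Assumption~\ref{asum1} and contradicts the hypothesis that it fails (so that $\Theta>\sum_j\theta_j$). I do not expect any genuine obstacle here; the only points requiring care are invoking the correct non-anticipatory form $t_i\le\theta_i$ of the participation constraint and handling the $\sum_j\theta_j=0$ degeneracy, both of which are routine.
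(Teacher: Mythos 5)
Your proposal is correct and follows essentially the same route as the paper's proof: substitute the proportional rule into the weak participation constraint $t_i\le\theta_i$ to reduce part~(i) to Assumption~\ref{asum1}, sum the $t_i$ directly for budget balance, read off equity and monotonicity from $t_i=\lambda\theta_i$, and obtain the impossibility claim by summing the participation inequalities and comparing with $\Theta$. Your treatment is slightly more careful than the paper's in handling the degenerate case $\sum_j\theta_j=0$ and in noting explicitly that $\lambda=\Theta/\sum_j\theta_j\ge 0$ is what makes the monotonicity implication valid, but these are refinements of the same argument rather than a different approach.
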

\begin{proof}
See Appendix in \cite{extend}.
\end{proof}


 In the following, we provide sufficient conditions on the cost functions $\{J, C_i\}$ for Assumption \ref{asum1} to hold.

\begin{definition}[Separability]
\label{def1}A function $J$ is defined to be separable if it can be written as $J(U)=\sum_{i=1}^{N}f_i(u_i),$
where $f_i(\cdot)$ is a function of $u_i$ but not of $u_{j},$ $j\neq i$.
\end{definition}
\begin{pro}
\label{lem1}
 Assumption \ref{asum1} holds, if one of the following statements  is true.
\begin{enumerate}[(i)]
\item $J$ is separable.
\item  $J$ is of the form \[J({U}, U^\star)=\vert \bar{J}(u_1,\cdots, u_N)-\bar{J}(u^\star_1,\cdots, u^\star_N) \vert,\] where $\bar{J}$ is separable.
\item $J(U^{i}_m)\geq J(\bar U)$ for any agent $i$.
\end{enumerate}
\end{pro}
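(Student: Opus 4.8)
The plan is to treat the three cases independently, in each case unwinding the definitions of $\Theta$ and $\theta_i$ against the structural hypothesis on $J$. Throughout I work in the non-anticipatory (cost-allocation) regime of Section~\ref{CA}, where the realized actions are the Nash actions with no incentive, so $U_r=\bar U$ and $u_{r_i}=\bar u_i$. Thus $\Theta=J(\bar U)-J(U^\star)$ and $\theta_i=J(U^i_m)-J(U^\star)$, where $U^i_m$ agrees with $U^\star$ in every coordinate except the $i$-th, which carries $\bar u_i$. The target inequality of Assumption~\ref{asum1} is then
\[
J(\bar U)-J(U^\star)\;\le\;\sum_{i=1}^{N}\bigl(J(U^i_m)-J(U^\star)\bigr).
\]

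For case (i), I would substitute $J(U)=\sum_i f_i(u_i)$. Because $U^i_m$ differs from $U^\star$ only in the $i$-th slot, every separable term cancels except the $i$-th, giving $\theta_i=f_i(\bar u_i)-f_i(u^\star_i)$. Summing over $i$ reconstructs exactly $J(\bar U)-J(U^\star)=\Theta$, so in fact $\Theta=\sum_i\theta_i$ and the (weak) inequality holds with equality. Case (ii) follows the same separable bookkeeping but inside the absolute value. Writing $\bar J(U)=\sum_i g_i(u_i)$, I first note that $J(U^\star,U^\star)=\vert\bar J(U^\star)-\bar J(U^\star)\vert=0$, so $\Theta=\bigl\vert\sum_i (g_i(\bar u_i)-g_i(u^\star_i))\bigr\vert$, while the single-coordinate cancellation gives $\theta_i=\bigl\vert g_i(\bar u_i)-g_i(u^\star_i)\bigr\vert$. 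The triangle inequality $\bigl\vert\sum_i a_i\bigr\vert\le\sum_i\vert a_i\vert$ with $a_i=g_i(\bar u_i)-g_i(u^\star_i)$ then yields $\Theta\le\sum_i\theta_i$.

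Case (iii) is the most direct: subtracting $J(U^\star)$ from both sides of the hypothesis $J(U^i_m)\ge J(\bar U)$ gives $\theta_i\ge\Theta$ for every $i$. Since each $\theta_j\ge 0$ (as noted immediately after the marginal-cost definition), I can discard all but one term to conclude $\sum_{j=1}^{N}\theta_j\ge\theta_1\ge\Theta$, which is exactly Assumption~\ref{asum1}.

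I expect the only real subtlety to lie in case (ii), specifically in justifying the two facts that make the triangle inequality applicable: that $J(U^\star)=0$ under the given (mildly self-referential) definition $J(U,U^\star)=\vert\bar J(U)-\bar J(U^\star)\vert$, and that the marginal-cost evaluations collapse to a one-coordinate difference of the separable $\bar J$. Once these are in place the inequality is immediate. Cases (i) and (iii) are essentially definitional and should require only the cancellation computation and the nonnegativity of the $\theta_i$, respectively; the fixed interpretation $U_r=\bar U$ is what makes case (iii) align $\theta_i$ with $\Theta$, so I would flag that choice explicitly at the outset.
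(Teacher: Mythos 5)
Your proposal is correct and follows essentially the same route as the paper's proof: the single-coordinate cancellation for the separable case (i), the triangle inequality applied to $\bar J$ for case (ii), and for case (iii) the observation that one term of $\sum_j\theta_j$ already dominates $\Theta$ while the rest are nonnegative (the paper phrases this as bounding $N-1$ of the $J(U^i_m)$ below by $J(U^\star)$, which is the same estimate). Your version of (ii) is in fact slightly more careful than the paper's, since you correctly keep the absolute value in $\theta_i=\vert g_i(\bar u_i)-g_i(u^\star_i)\vert$ where the paper drops it.
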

\begin{proof}
See Appendix in \cite{extend}.
\end{proof}


\begin{example}
Consider a system with two non-anticipatory agents and the cost function of the central operator being given by $J(u_1, u_2)=u^2_1+u^2_2+u_1+u_2-u_1u_2,$ that is minimized by the choice $(u^\star_1,u_2^\star)=(-1,-1)$.
Further, using the definition of the marginal and the excess cost, we have
\[
\theta_1=\bar{u}^2_1-{u^\star}^2_1-u^\star_2(\bar{u}_{1}-u^\star_1), \:\:\:\theta_2=\bar{u}^2_{2}-{u^\star}^2_2-u^\star_{1}(\bar{u}_{2}-u^\star_2)\]
 and $\Theta= \bar{u}^2_{1}-{u^\star}^2_1+\bar{u}^2_{2}-{u^\star}^2_2-\bar{u}_{1}\bar{u}_{2}+u^\star_{1}u^\star_2$.
Assumption \ref{asum1} implies that $\theta_1+\theta_2\geq \Theta,$
which yields
\begin{eqnarray}
-u^\star_2(\bar{u}_1-u^\star_1)-u^\star_1(\bar{u}_2-u^\star_2)\geq -\bar{u}_{2}\bar{u}_1+u^\star_{1}\bar{u}_2 \nonumber
\end{eqnarray}
Given that $(u^\star_1,u_2^\star)=(-1,-1)$, Assumption \ref{asum1} leads to  $(\bar{u}_1+1)(\bar{u}_2+1)\geq 0.$ The feasible region for $\bar{u}_1$ and $\bar{u}_2$  is $\{\bar{u}_1>-1, \bar{u}_2>-1\}$ and $\{\bar{u}_1\leq-1, \bar{u}_2\leq-1\}$. Any $C_1$ and $C_2$ which have a Nash equilibrium solution in the feasible region satisfy Assumption \ref{asum1}.
\end{example}


\subsection{Interpretation as Mechanism Design}
\label{PD}
As discussed earlier, a crucial assumption in our framework is that the cost functions are private knowledge to the agents. If that assumption does not hold,  we can consider the 
%
%
%
following incentive mechanism $t_i$ that is inspired by the VCG mechanism:
 \begin{equation}
 \label{tax}
t_i=C_{r_i}-\hat{C}_{r_i} ,\quad C_{r_i}=J-C_i, \: \hat{C}_{r_i}=C_{r_i}(\hat{u}_{-i},\tilde{u}_i)
 \end{equation}
Under this incentive scheme, the agents fully internalize the effect of their actions on the cost function of the central operator, in that the cost function of each agent becomes equal to $J$.
\begin{remark}
Note that the operator utilizes the knowledge of  the cost functions $C_i$. Further, the incentive design is similar to VCG in that the incentive for the $i$-th agent can be interpreted as the marginal cost incurred by the central operator due to the self-interested actions of the $i$-th agent. However, two factors make its form different from the traditional VCG design. First, the objective of the central operator is not social welfare given by the sum of the costs $C_{i}$, but rather, to minimize his cost $J$. Further, when the agent opts out, it still chooses an action that affects the cost incurred by the central operator. Second,  the actions are implemented by the agents and the operator does not have direct control over them. 
 \end{remark}


\begin{thm}
 \label{thm2}
The incentive design in \eqref{tax}:
\begin{enumerate}[(i)]
\item  is socially optimal if  Hessian of $J$ is positive definite for any $U$,
\item satisfies the participation constraint.
\item is weak budget balanced if 
\begin{equation}
\label{bb}
C_{r_i}(U^\star)-C_{r_i}(\hat{u}_{-i},\tilde{u}_i)\geq 0 ,\:\: \forall i.
\end{equation}
\end{enumerate}
 \end{thm}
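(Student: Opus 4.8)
The plan is to prove the three parts of Theorem~\ref{thm2} separately, exploiting the fact that under the incentive scheme~\eqref{tax} each agent's modified cost function becomes $C_i(U) + t_i = C_i + C_{r_i} - \hat{C}_{r_i} = J(U) - \hat{C}_{r_i}$. The key observation is that $\hat{C}_{r_i}$ does not depend on the agent's own action $u_i$ (it is evaluated at the opt-out profile), so minimizing the modified cost is equivalent, for each agent, to minimizing $J(U)$ itself. This alignment of every agent's effective objective with the operator's objective $J$ is what drives all three conclusions.

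For part~(i), social optimality, I would argue that since each agent $i$ now minimizes $J(U) - \hat{C}_{r_i}$ over $u_i$, the best-response condition reduces to $u_i' \in \arg\min_{u_i} J(u_i, u_{-i}')$ for every $i$ simultaneously. Thus any Nash equilibrium $U'$ of the incentivized game is a profile where each $u_i'$ is a stationary point of $J$ in its own coordinate, i.e.\ $\nabla J(U') = 0$. If the Hessian of $J$ is positive definite everywhere, $J$ is strictly convex, so its unique stationary point is the global minimizer $U^\star$ from~\eqref{eq1}; hence $U' = U^\star$, which is precisely social optimality. The positive-definiteness hypothesis is exactly what upgrades ``each coordinate is a stationary point'' to ``the profile is the global optimum.''

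For part~(ii), the participation constraint, I would compare agent $i$'s cost when participating, $C_i(U') + t_i(U') = J(U') - \hat{C}_{r_i}$, against its cost when it opts out, $C_i(\tilde{u}_i, \hat{u}_{-i})$. When agent $i$ opts out, the remaining effective game still has each participating agent minimizing $J$, and by definition $\hat{C}_{r_i} = C_{r_i}(\hat{u}_{-i}, \tilde{u}_i) = J(\tilde{u}_i, \hat{u}_{-i}) - C_i(\tilde{u}_i, \hat{u}_{-i})$. Substituting, the participation inequality $C_i(\tilde{u}_i, \hat{u}_{-i}) \geq J(U') - \hat{C}_{r_i}$ rearranges to $J(\tilde{u}_i, \hat{u}_{-i}) \geq J(U')$, i.e.\ the operator's cost at the opt-out profile is at least its cost at the full-participation profile. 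Since under~\eqref{tax} every agent (participating or not) is effectively steering $J$ toward its minimum, and $U'$ is the jointly optimal response while the opt-out profile is a constrained response, this should follow from optimality of $U'$ for $J$; the main thing to verify carefully is that the opt-out Nash profile indeed yields a $J$-value no smaller than the unconstrained equilibrium.

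For part~(iii), weak budget balance, I would compute the total incentive $\sum_{i=1}^N t_i = \sum_i \bigl(C_{r_i}(U^\star) - \hat{C}_{r_i}\bigr)$ at social optimality (using part~(i) so that $U' = U^\star$), and compare it to the excess cost $\Theta = J(U') - J(U^\star) = 0$. Under condition~\eqref{bb}, each summand $C_{r_i}(U^\star) - C_{r_i}(\hat{u}_{-i}, \tilde{u}_i) \geq 0$, so $\sum_i t_i \geq 0 = \Theta$, giving weak budget balance $\sum_i t_i > \Theta$ (or $\geq$). The main obstacle across the whole proof is part~(ii): establishing the opt-out comparison rigorously requires carefully tracking how the Nash equilibrium shifts when agent $i$ withdraws and confirming that the effective-cost identity $C_i + t_i = J - \hat{C}_{r_i}$ combined with optimality of the participating equilibrium really does dominate the opt-out value, since the opt-out profile is itself an equilibrium of a different game rather than a single-agent deviation.
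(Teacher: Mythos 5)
Your proposal is correct and, for parts (ii) and (iii), follows essentially the same route as the paper: rewrite the participating cost as $J(U')-\hat{C}_{r_i}$, use $U'=U^\star$ to reduce the participation constraint to $J(\tilde{u}_i,\hat{u}_{-i})\geq J(U^\star)$, and note that $\Theta=0$ under social optimality so weak budget balance reduces to $\sum_{i} t_i\geq 0$, which is exactly what \eqref{bb} delivers term by term. The caution you express at the end of part (ii) is unnecessary: once $U'=U^\star$, the required inequality is simply the global optimality of $U^\star$ for $J$ evaluated at the opt-out profile, so no tracking of how the equilibrium shifts upon withdrawal is needed --- this is precisely how the paper closes that step. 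Where you genuinely differ is part (i). The paper proves uniqueness of the Nash equilibrium of the incentivized game by invoking Rosen's sufficient condition (diagonal strict convexity of the collection of effective cost functions) and observing that the relevant matrix of second derivatives is exactly the Hessian of $J$. You instead argue directly: every agent's effective objective is $J$ up to an additive constant, so any equilibrium satisfies $\nabla J(U')=0$, and a globally positive definite Hessian makes $J$ strictly convex with a unique stationary point, namely $U^\star$. Your argument is more elementary and transparent in this unconstrained smooth setting; the paper's appeal to Rosen buys generality (e.g., constrained strategy sets) that is not actually used here. Both routes reach the same conclusion under the same hypothesis.
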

 \begin{proof}
See Appendix in \cite{extend}.
 \end{proof}
  \begin{remark}
  The incentive scheme~(\ref{tax}) does not, in general,  satisfy  monotonicity  and equity.
   \end{remark}
 \begin{remark}
 It is worth mentioning that Theorem \ref{thm2}(i)  provides a sufficient condition on $J$ for achieving an optimal centralized solution in the distributed system. Further, the Hessian of $J$ being positive definite implies that there is a unique  optimal solution for the cost function of the operator. Thus, Theorem \ref{thm2}(i) states that if there is a unique solution that optimizes the cost function of the central operator, it can ensure that the agents will adopt those optimal actions by offering the incentive design specified in (\ref{tax}).
 \end{remark}
The price to pay for achieving social optimality and participation constraint in such an incentive design is that the incentives may be large and vary in a non-intuitive way among the agents. Thus, budget balance, monotonicity, and equity may be lost. 
\begin{example}
 Consider a system with the cost function of the central operator being $J=\frac{u^2_1}{2}+u^2_2-u_1+u_2-u_1u_2$ and two agents. It is easy to see that $U^\star=(1,0)$.
 \begin{itemize}
     \item If the cost functions of the agents are $C_1=\frac{u^2_1}{2}-u_1$ and $C_2=\frac{u^2_2}{2}+u_1u_2-u_2$, it is easy to calculate
     $(\hat{u}_{2},\tilde{u}_1)=(0,1)$, and $(\hat{u}_{1},\tilde{u}_2)=(1,0)$. It is easy to verify that \eqref{bb} holds and budget balance is satisfied.
     \item If the cost functions of the agents are $C_1=\frac{u^2_1}{2}+u_1$ and $C_2=\frac{u^2_2}{2}+u_1u_2-u_2$, budget balance is not satisfied.
     \item  Note that the incentive design in \eqref{tax} is socially optimal; hence,  $\theta_i=0$ for $i\in \mathcal N$. However, for $C_1=\frac{u^2_1}{2}+u_1$ and $C_2=\frac{u^2_2}{2}+u_1u_2-u_2$, although $\theta_1=\theta_2=0$, $t_1\neq t_2$. Thus, monotonicity is not satisfied in this case.
 \end{itemize}
\end{example}



\subsection{Discussion}

Table \ref{table1} summarizes the fulfillment of different properties by incentive designs under various assumptions on the system as analyzed in this paper. In this table,
`Y' means the allocation satisfies the corresponding property, whereas `C' states that satisfaction of the property is conditional.
\begin{table}
\vspace{3mm}
\centering
\caption{Comparison between  incentive schemes in terms of properties}
    \begin{tabular}{ p{2.7cm} | p{2cm} | p{2cm}  p{2cm}}
    \hline\hline
    \textbf{Property} & \textit{Known cost functions } & \textit{Private cost functions} \\ \hline
    Social optimality&Y & C \\ \hline
    Budget balance & C & Y\\
    \hline
      Participation constraint& Y & C\\
    \hline
      Equity $\&$ Monotonicity  & C& Y\\
    \hline   
    \end{tabular}
    \label{table1}\vspace{-6mm}
\end{table}
  
We see that the information available to the operator about the cost functions of the agents is crucial in selecting the properties that he can satisfy through an incentive scheme. If the cost functions are public information, then the operator can enact a VCG-like mechanism to ensure that the agents fully internalize the cost of their actions on his cost function. In other words, he can ensure that the actions chosen by the agents minimize his cost function. However, these incentives may not satisfy properties such as equity and monotonicity. On the other hand, if these cost functions are private information, satisfaction of social optimality and participation constraints may not be guaranteed.


 \section{Conclusion}
 \label{conclusion}

 In this paper, we considered a general distributed system with multiple agents who select and then implement actions on a system. The system has an operator with a given objective. However, the agents are self-interested and strategic in the sense that each agent optimizes its own individual objective.  The operator aims to mitigate this misalignment by designing an incentive scheme for the agents. The problem is difficult due to the cost functions of the agents being coupled, the objective of the operator not being social welfare, and the operator having no direct control over actions being implemented by the agents.
 
 This problem has been studied in many fields, particularly in mechanism design and cost allocation. However, mechanism design typically assumes that the objective of the operator is social welfare and the actions being implemented by the operator. On the other hand, cost allocation classically assumes that agents do not anticipate the effect of their actions on the incentive that they obtain. We have removed these assumptions and presented an incentive rule for this setup by bridging the gap between mechanism design and classical cost allocation. We have analyzed whether the proposed design satisfies various desirable properties such as social optimality, budget balance, participation constraint, and so on. We have also analyzed which of these properties can be  satisfied if the assumptions of cost functions of the agents being private and the agents being anticipatory are relaxed. 
 
 The work can be extended in various directions. The framework was inspired by smart infrastructure systems and various applications of the framework with formulations inspired by such systems can be considered. We are also interested in spanning the spectrum between full knowledge and no knowledge of the cost functions of the agents by the operator.

\bibliography{reference}
\bibliographystyle{ieeetr}


\end{document}